\newtheorem{theorem}{Theorem}[section]
\newtheorem{lemma}[theorem]{Lemma}
\newcommand\DELETE[1]{}
\begin{document}

\title{{\bf Erratum for ``On oriented cliques with respect to push operation''}}
\author{
{\sc Julien Bensmail}$\,^a$, {\sc Soumen Nandi}$\,^{b}$, {\sc Sagnik Sen$^{c}$}\\
\mbox{}\\
{\small $(a)$ Universit\'e C\^ote d'Azur, Inria, CNRS, I3S, France}\\
{\small $(b)$ Birla Institute of Technology \& Science Pilani, Hyderabad Campus, India}\\
{\small $(c)$ Ramakrishna Mission Vivekananda Educational and Research Institute, Belur Math, India}}

%\fntext[fn1]{The first author was supported by ERC Advanced Grant GRACOL, project no. 320812.}

\date{\today}

\maketitle

% ----------------------------------------------------------------
\begin{abstract}
An error is spotted in the statement of  Theorem~1.3 of our published article titled 
  ``On oriented cliques with respect to push operation'' (Discrete Applied Mathematics 2017). 
  The theorem provided an exhaustive list of 16 minimal (up to spanning subgraph inclusion)
 underlying planar push cliques. The error was that, one of the 16 graphs from the above list was missing an arc. 
  We correct the error and restate the corrected statement in this article. We also point out the reason for the error and comment that the error occurred due to a mistake in a particular lemma. We present the corrected proof of that particular lemma as well. Moreover, a few counts were wrongly reported due to the above mentioned error. So we update our reported counts after correction in this article. 
\end{abstract}

\noindent \textbf{Keywords:} oriented graphs,  push operation, push cliques, planar graphs.
% ----------------------------------------------------------------

%%%%%%%%%%%%%%%%%%%%%%%%%%%%%%%%%%%%%%%%%%%%%%%%%%%%
%%%%%%%%%%%%%%%%%%%%%%%%%%%%%%%%%%%%%%%%%%%%%%%%%%%%
%%%%%%%%%%%%%%%%%%%%%%%%%%%%%%%%%%%%%%%%%%%%%%%%%%%%
%%%%%%%%%%%%%%%%%%%%%%%%%%%%%%%%%%%%%%%%%%%%%%%%%%%%
%%%%%%%%%%%%%%%%%%%%%%%%%%%%%%%%%%%%%%%%%%%%%%%%%%%%
%%%%%%%%%%%%%%%%%%%%%%%%%%%%%%%%%%%%%%%%%%%%%%%%%%%%
%%%%%%%%%%%%%%%%%%%%%%%%%%%%%%%%%%%%%%%%%%%%%%%%%%%%
%%%%%%%%%%%%%%%%%%%%%%%%%%%%%%%%%%%%%%%%%%%%%%%%%%%%

\section{Introduction and main results}
This is an erratum reporting and correcting an error in our 
published article 
titled  ``On oriented cliques with respect to push operation''~\cite{bensmail2017oriented}. 
We spotted the error after the article was published. All notations and terminologies used in this article are the same as the ones used in the above mentioned paper~\cite{bensmail2017oriented}.

The error occured in Theorem~1.3 of the concerned article~\cite{bensmail2017oriented}.
The theorem characterizes all minimal (up to spanning subgraph inclusion)
 underlying planar push cliques. 
 The theorem claimed that there are exactly $16$ such graphs and 
 we provided a figure (Figure~1 in our paper~\cite{bensmail2017oriented}) which 
 illustrated all the $16$ graphs. 
 In the figure, we actually presented orientations of those $16$ undirected graphs 
 in such a way that they were push cliques.  
We  called those graphs as 
$\overrightarrow{H}_1, \overrightarrow{H}_2, \cdots, \overrightarrow{H}_{16}$.

However, our list had an error. 
The graph $\overrightarrow{H}_{10}$, as illustrated 
in Figure~1 of our paper~\cite{bensmail2017oriented} is not a push clique. 
After finding the error, we rechecked our proof and found that the error can be fixed easily by just adding one arc in $\overrightarrow{H}_{10}$. 
Moreover, our proof is correct throughout, except for Lemma~5.9 of the 
paper~\cite{bensmail2017oriented}.

Even though the error can be corrected by simply adding an arc in the figure, it is a significant error as anyone using our result may create an error in their works as well. So we feel compelled to present the corrected version of 
Theorem~1.3 from~\cite{bensmail2017oriented} here. There are some errors in the proof of Lemma~5.9 as well, mainly due to the error in the figure. However, the errors are 
restricted to the ``case analysis'' portion of the proof and can be fixed without too much trouble. We will also provide a short proof of Lemma~5.9 here.

Moreover, the total number of non-isomorphic  planar underlying push cliques reported in the Section~6 (Conclusions) of the paper~\cite{bensmail2017oriented} will also be updated. 
We are going to report the updated count in the last paragraph of this article.

Therefore, we  present a corrected version of Theorem~1.3 from~\cite{bensmail2017oriented} in the following.

\begin{figure}

\centering
\begin{tikzpicture}[scale=.8]
%\draw[step=1cm,gray,very thin] (0,0) grid (12,4);

\filldraw [black] (-.5,6) circle (2pt) {node[left]{}};

\node at (-.5,4.25) {$\overrightarrow{H}_1$};

\filldraw [black] (.5,6.5) circle (2pt) {node[above]{}};
\filldraw [black] (.5,5.5) circle (2pt) {node[below left]{}};

\begin{scope}[thin,decoration={
    markings,
    mark=at position 0.5 with {\arrow{>}}}
    ] 
\draw[postaction={decorate}]  (.5,5.5) -- (.5,6.5);
\end{scope}

\node at (.5,4.25) {$\overrightarrow{H}_2$};

\filldraw [black] (2,7) circle (2pt) {node[above]{}};
\filldraw [black] (1.5,6) circle (2pt) {node[below left]{}};
\filldraw [black] (2,5) circle (2pt) {node[below left]{}};

\begin{scope}[thin,decoration={
    markings,
    mark=at position 0.5 with {\arrow{>}}}
    ] 
\draw[postaction={decorate}] (2,7) -- (1.5,6);
\draw[postaction={decorate}] (1.5,6) -- (2,5);
\draw[postaction={decorate}] (2,5) -- (2,7);
\end{scope}

\node at (1.7,4.25) {$\overrightarrow{H}_3$};

\filldraw [black] (3,6.5) circle (2pt) {node[above]{}};
\filldraw [black] (3,5.5) circle (2pt) {node[below left]{}};
\filldraw [black] (4,6.5) circle (2pt) {node[below left]{}};
\filldraw [black] (4,5.5) circle (2pt) {node[below left]{}};

\begin{scope}[thin,decoration={
    markings,
    mark=at position 0.5 with {\arrow{>}}}
    ] 
\draw[postaction={decorate}] (3,6.5) -- (3,5.5);
\draw[postaction={decorate}] (4,6.5) -- (4,5.5);
\draw[postaction={decorate}] (3,6.5) -- (4,6.5);
\draw[postaction={decorate}] (4,5.5) -- (3,5.5);
\end{scope}

\node at (3.5,4.25) {$\overrightarrow{H}_4$};

\filldraw [black] (5,5.25) circle (2pt) {node[above]{}};
\filldraw [black] (7,5.25) circle (2pt) {node[below left]{}};
\filldraw [black] (5,6.25) circle (2pt) {node[below left]{}};
\filldraw [black] (7,6.25) circle (2pt) {node[below left]{}};
\filldraw [black] (6,7.25) circle (2pt) {node[below left]{}};

\begin{scope}[thin,decoration={
    markings,
    mark=at position 0.5 with {\arrow{>}}}
    ] 
\draw[postaction={decorate}] (5,5.25) -- (7,5.25);
\draw[postaction={decorate}] (7,5.25) -- (7,6.25);
\draw[postaction={decorate}] (7,6.25) -- (6,7.25);
\draw[postaction={decorate}] (5,6.25) -- (6,7.25);
\draw[postaction={decorate}] (5,5.25) -- (5,6.25);
\draw[postaction={decorate}] (7,6.25) -- (5,6.25);
\draw[postaction={decorate}] (5,5.25) .. controls (4.15,5.95) and (4.5,7.05) .. (6,7.25);
\end{scope}

\node at (6,4.25) {$\overrightarrow{H}_5$};

\filldraw [black] (9,7.5) circle (2pt) {node[above]{}};
\filldraw [black] (8,6.75) circle (2pt) {node[above]{}};
\filldraw [black] (10,6.75) circle (2pt) {node[above]{}};
\filldraw [black] (8,5.75) circle (2pt) {node[above]{}};
\filldraw [black] (10,5.75) circle (2pt) {node[above]{}};
\filldraw [black] (9,5) circle (2pt) {node[above]{}};

\begin{scope}[thin,decoration={
    markings,
    mark=at position 0.5 with {\arrow{>}}}
    ] 
\draw[postaction={decorate}] (8,6.75) -- (9,7.5);
\draw[postaction={decorate}] (8,6.75) -- (8,5.75);
\draw[postaction={decorate}] (8,5.75) -- (9,5);
\draw[postaction={decorate}] (9,5) -- (10,5.75);
\draw[postaction={decorate}] (10,5.75) -- (10,6.75);
\draw[postaction={decorate}] (10,6.75) -- (9,7.5);
\draw[postaction={decorate}] (8,5.75) -- (10,5.75);
\draw[postaction={decorate}] (8,5.75) -- (10,6.75);
\draw[postaction={decorate}] (10, 5.75) .. controls (10.2,5.75) and (11,7.25) .. (9,7.5);
\draw[postaction={decorate}] (9,5) .. controls (11.25,5.5) and (11.25,7.5) .. (9,7.5);
\end{scope}

\node at (9,4.25) {$\overrightarrow{H}_6$};

\filldraw [black] (1,3.5) circle (2pt) {node[above]{}};
\filldraw [black] (0,2.5) circle (2pt) {node[above]{}};
\filldraw [black] (2,2.5) circle (2pt) {node[above]{}};
\filldraw [black] (0,1.5) circle (2pt) {node[above]{}};
\filldraw [black] (2,1.5) circle (2pt) {node[above]{}};
\filldraw [black] (1,.5) circle (2pt) {node[above]{}};

\begin{scope}[thin,decoration={
    markings,
    mark=at position 0.5 with {\arrow{>}}}
    ] 
\draw[postaction={decorate}] (1,3.5) -- (0,2.5);
\draw[postaction={decorate}] (0,1.5) -- (0,2.5);
\draw[postaction={decorate}] (0,1.5) -- (1,.5);
\draw[postaction={decorate}] (2,1.5) -- (1,.5);
\draw[postaction={decorate}] (2,1.5) -- (2,2.5);
\draw[postaction={decorate}] (1,3.5) -- (2,2.5);
\draw[postaction={decorate}] (2,2.5) -- (0,2.5);
\draw[postaction={decorate}] (0,1.5) -- (2,1.5);
\draw[postaction={decorate}] (1,.5) .. controls (3,1) and (3,3) .. (1,3.5);
\end{scope}

\node at (1,-.25) {$\overrightarrow{H}_7$};

\filldraw [black] (5,3.5) circle (2pt) {node[above]{}};
\filldraw [black] (4,2.5) circle (2pt) {node[above]{}};
\filldraw [black] (6,2.5) circle (2pt) {node[above]{}};
\filldraw [black] (4,1.5) circle (2pt) {node[above]{}};
\filldraw [black] (6,1.5) circle (2pt) {node[above]{}};
\filldraw [black] (5,.5) circle (2pt) {node[above]{}};

\begin{scope}[thin,decoration={
    markings,
    mark=at position 0.5 with {\arrow{>}}}
    ] 
\draw[postaction={decorate}] (5,3.5) -- (4,2.5);
\draw[postaction={decorate}] (4,1.5) -- (4,2.5);
\draw[postaction={decorate}] (5,.5) -- (4,1.5);
\draw[postaction={decorate}] (5,.5) -- (6,1.5);
\draw[postaction={decorate}] (6,1.5) -- (6,2.5);
\draw[postaction={decorate}] (5,3.5) -- (6,2.5);

\draw[postaction={decorate}] (6,1.5) -- (5,3.5);
\draw[postaction={decorate}] (5,3.5) -- (4,1.5);
\draw[postaction={decorate}] (5,3.5) -- (5,.5);
\draw[postaction={decorate}] (4,2.5) .. controls (4.3,4.35) and (5.7,4.35) .. (6,2.5);
\end{scope}

\node at (5,-.25) {$\overrightarrow{H}_8$};

\filldraw [black] (9,3.5) circle (2pt) {node[above]{}};
\filldraw [black] (8,2.5) circle (2pt) {node[above]{}};
\filldraw [black] (10,2.5) circle (2pt) {node[above]{}};
\filldraw [black] (8,1.5) circle (2pt) {node[above]{}};
\filldraw [black] (10,1.5) circle (2pt) {node[above]{}};
\filldraw [black] (9,.5) circle (2pt) {node[above]{}};

\begin{scope}[thin,decoration={
    markings,
    mark=at position 0.5 with {\arrow{>}}}
    ] 
\draw[postaction={decorate}] (8,2.5) -- (9,3.5);
\draw[postaction={decorate}] (8,1.5) -- (8,2.5);
\draw[postaction={decorate}] (8,1.5) -- (9,.5);
\draw[postaction={decorate}] (9,.5) -- (10,1.5);
\draw[postaction={decorate}] (10,1.5) -- (10,2.5);
\draw[postaction={decorate}] (10,2.5) -- (9,3.5);

\draw[postaction={decorate}] (8,1.5) -- (10,1.5);
\draw[postaction={decorate}] (10,2.5) -- (8,2.5);
\draw[postaction={decorate}] (8,1.5) .. controls (7,2.7) and (8,3.5) .. (9,3.5);
\draw[postaction={decorate}] (9,.5) .. controls (10,.7) and (11.18,1.15) .. (10,2.5);
\end{scope}

\node at (9,-.25) {$\overrightarrow{H}_9$};

\filldraw [black] (1,-1) circle (2pt) {node[above]{}};
\filldraw [black] (0,-2) circle (2pt) {node[above]{}};
\filldraw [black] (2,-2) circle (2pt) {node[above]{}};
\filldraw [black] (0,-3) circle (2pt) {node[above]{}};
\filldraw [black] (2,-3) circle (2pt) {node[above]{}};
\filldraw [black] (.5,-4) circle (2pt) {node[above]{}};
\filldraw [black] (1.5,-4) circle (2pt) {node[above]{}};

\begin{scope}[thin,decoration={
    markings,
    mark=at position 0.5 with {\arrow{>}}}
    ] 
\draw[postaction={decorate}] (1,-1) -- (0,-2);
\draw[postaction={decorate}] (0,-2) -- (0,-3);
\draw[postaction={decorate}] (2,-3) -- (1,-1);
\draw[postaction={decorate}] (0,-3) -- (.5,-4);
\draw[postaction={decorate}] (1.5,-4) -- (.5,-4);
\draw[postaction={decorate}] (1.5,-4) -- (2,-3);
\draw[postaction={decorate}] (2,-3) -- (2,-2);
\draw[postaction={decorate}] (2,-2) -- (1,-1);

\draw[postaction={decorate}] (2,-3) -- (0,-2);
\draw[postaction={decorate}] (1.5,-4) -- (0,-3);
\draw[postaction={decorate}] (1.5,-4) .. controls (2.3,-3.8) and (2.7,-2.6) .. (2,-2);
\draw[postaction={decorate}] (1,-1) .. controls (-.5,-1.4) and (-.3,-2.5) .. (0,-3);
\draw[postaction={decorate}] (1,-1) .. controls (-1.3,-1.2) and (-1,-3.8) .. (.5,-4);
\end{scope}

\node at (1,-4.75) {$\overrightarrow{H}_{10}$};

\filldraw [black] (5,-1) circle (2pt) {node[above]{}};
\filldraw [black] (4,-2) circle (2pt) {node[above]{}};
\filldraw [black] (6,-2) circle (2pt) {node[above]{}};
\filldraw [black] (4,-3) circle (2pt) {node[above]{}};
\filldraw [black] (6,-3) circle (2pt) {node[above]{}};
\filldraw [black] (4.5,-4) circle (2pt) {node[above]{}};
\filldraw [black] (5.5,-4) circle (2pt) {node[above]{}};

\begin{scope}[thin,decoration={
    markings,
    mark=at position 0.5 with {\arrow{>}}}
    ] 
\draw[postaction={decorate}] (4,-2) -- (5,-1);
\draw[postaction={decorate}] (4,-2) -- (4,-3);
\draw[postaction={decorate}] (4.5,-4) -- (4,-3);
\draw[postaction={decorate}] (5.5,-4) -- (4.5,-4);
\draw[postaction={decorate}] (6,-3) -- (5.5,-4);
\draw[postaction={decorate}] (6,-2) -- (6,-3);
\draw[postaction={decorate}] (5,-1) -- (6,-2);

\draw[postaction={decorate}] (4,-2) -- (4.5,-4);
\draw[postaction={decorate}] (4,-2) -- (6,-3);
\draw[postaction={decorate}] (4,-2) -- (6,-2);
\draw[postaction={decorate}] (6,-2) .. controls (6.7,-3) and (6.2,-3.5) .. (5.5,-4);
\draw[postaction={decorate}] (5.5,-4) .. controls (5,-4.2) and (4.1,-5) .. (4,-3);
\draw[postaction={decorate}] (5,-1) .. controls (7.2,-1.75) and (7,-3.5) .. (5.5,-4);
\end{scope}

\node at (5,-4.75) {$\overrightarrow{H}_{11}$};

\filldraw [black] (9,-1) circle (2pt) {node[above]{}};
\filldraw [black] (8,-2) circle (2pt) {node[above]{}};
\filldraw [black] (10,-2) circle (2pt) {node[above]{}};
\filldraw [black] (8,-3) circle (2pt) {node[above]{}};
\filldraw [black] (10,-3) circle (2pt) {node[above]{}};
\filldraw [black] (8.5,-4) circle (2pt) {node[above]{}};
\filldraw [black] (9.5,-4) circle (2pt) {node[above]{}};

\begin{scope}[thin,decoration={
    markings,
    mark=at position 0.5 with {\arrow{>}}}
    ] 
\draw[postaction={decorate}] (8,-2) -- (9,-1);
\draw[postaction={decorate}] (8,-2) -- (8,-3);
\draw[postaction={decorate}] (8,-3) -- (8.5,-4);
\draw[postaction={decorate}] (8.5,-4) -- (9.5,-4);
\draw[postaction={decorate}] (9.5,-4) -- (10,-3);
\draw[postaction={decorate}] (10,-3) -- (10,-2);
\draw[postaction={decorate}] (9,-1) -- (10,-2);

\draw[postaction={decorate}] (8.5,-4) -- (9,-1);
\draw[postaction={decorate}] (9.5,-4) -- (9,-1);
\draw[postaction={decorate}] (10,-2) -- (9.5,-4);
\draw[postaction={decorate}] (8,-2) .. controls (8.6,-.3) and (9.4,-.3) .. (10,-2);
\draw[postaction={decorate}] (8,-3) .. controls (8,-5) and (10,-5) .. (10,-3);
\end{scope}

\node at (9,-4.95) {$\overrightarrow{H}_{12}$};

\filldraw [black] (-1.5,-5.5) circle (2pt) {node[above]{}};
\filldraw [black] (-.5,-5.5) circle (2pt) {node[above]{}};
\filldraw [black] (-2,-6.5) circle (2pt) {node[above]{}};
\filldraw [black] (0,-6.5) circle (2pt) {node[above]{}};
\filldraw [black] (-2,-7.5) circle (2pt) {node[above]{}};
\filldraw [black] (0,-7.5) circle (2pt) {node[above]{}};
\filldraw [black] (-1.5,-8.5) circle (2pt) {node[above]{}};
\filldraw [black] (-.5,-8.5) circle (2pt) {node[above]{}};

\begin{scope}[thin,decoration={
    markings,
    mark=at position 0.5 with {\arrow{>}}}
    ] 
\draw[postaction={decorate}] (-2,-6.5) -- (-1.5,-5.5);
\draw[postaction={decorate}] (-2,-6.5) -- (-2,-7.5);
\draw[postaction={decorate}] (-2,-7.5) -- (-1.5,-8.5);
\draw[postaction={decorate}] (-.5,-8.5) -- (-1.5,-8.5);
\draw[postaction={decorate}] (0,-7.5) -- (-.5,-8.5);
\draw[postaction={decorate}] (0,-7.5) -- (0,-6.5);
\draw[postaction={decorate}] (0,-6.5) -- (-.5,-5.5);
\draw[postaction={decorate}] (-.5,-5.5) -- (-1.5,-5.5);

\draw[postaction={decorate}] (-2,-6.5) -- (-1.5,-8.5);
\draw[postaction={decorate}] (-1.5,-8.5) -- (0,-7.5);
\draw[postaction={decorate}] (0,-7.5) -- (-.5,-5.5);
\draw[postaction={decorate}] (-2,-6.5) -- (-.5,-5.5);

\draw[postaction={decorate}] (-.5,-8.5) .. controls (.2,-8) and (.7,-7.5) .. (0,-6.5);
\draw[postaction={decorate}] (-1.5,-5.5) .. controls (-1.2,-5) and (.3,-5) .. (0,-6.5);
\draw[postaction={decorate}] (-1.5,-5.5) .. controls (-2,-5.5) and (-3,-6.3) .. (-2,-7.5);
\draw[postaction={decorate}] (-2,-7.5) .. controls (-2,-9.3) and (-1,-8.8) .. (-.5,-8.5);
\end{scope}

\node at (-1,-9.45) {$\overrightarrow{H}_{13}$};

\filldraw [black] (2.5,-5.5) circle (2pt) {node[above]{}};
\filldraw [black] (3.5,-5.5) circle (2pt) {node[above]{}};
\filldraw [black] (2,-6.5) circle (2pt) {node[above]{}};
\filldraw [black] (4,-6.5) circle (2pt) {node[above]{}};
\filldraw [black] (2,-7.5) circle (2pt) {node[above]{}};
\filldraw [black] (4,-7.5) circle (2pt) {node[above]{}};
\filldraw [black] (2.5,-8.5) circle (2pt) {node[above]{}};
\filldraw [black] (3.5,-8.5) circle (2pt) {node[above]{}};

\begin{scope}[thin,decoration={
    markings,
    mark=at position 0.5 with {\arrow{>}}}
    ] 
\draw[postaction={decorate}] (2,-6.5) -- (2.5,-5.5);
\draw[postaction={decorate}] (2,-7.5) -- (2,-6.5);
\draw[postaction={decorate}] (2.5,-8.5) -- (2,-7.5);
\draw[postaction={decorate}] (3.5,-8.5) -- (2.5,-8.5);
\draw[postaction={decorate}] (4,-7.5) -- (3.5,-8.5);
\draw[postaction={decorate}] (4,-6.5) -- (4,-7.5);
\draw[postaction={decorate}] (3.5,-5.5) -- (4,-6.5);
\draw[postaction={decorate}] (3.5,-5.5) -- (2.5,-5.5);

\draw[postaction={decorate}] (3.5,-5.5) -- (2,-6.5);
\draw[postaction={decorate}] (3.5,-5.5) -- (2,-7.5);

\draw[postaction={decorate}] (3.5,-5.5) -- (3.5,-8.5);
\draw[postaction={decorate}] (3.5,-5.5) -- (4,-7.5);

\draw[postaction={decorate}] (4,-7.5) .. controls (3.8,-9) and (3.3,-9) .. (2.5,-8.5);
\draw[postaction={decorate}] (4,-6.5) .. controls (5,-8) and (3.5,-9.9) .. (2.5,-8.5);
\draw[postaction={decorate}] (2.5,-8.5) .. controls (1.3,-7.5) and (1.8,-7) .. (2,-6.5);
\draw[postaction={decorate}] (2.5,-8.5) .. controls (.8,-7.5) and (1.2,-6) .. (2.5,-5.5);
\end{scope}

\node at (3,-9.45) {$\overrightarrow{H}_{14}$};

\filldraw [black] (6.5,-5.5) circle (2pt) {node[above]{}};
\filldraw [black] (7.5,-5.5) circle (2pt) {node[above]{}};
\filldraw [black] (6,-6.5) circle (2pt) {node[above]{}};
\filldraw [black] (8,-6.5) circle (2pt) {node[above]{}};
\filldraw [black] (6,-7.5) circle (2pt) {node[above]{}};
\filldraw [black] (8,-7.5) circle (2pt) {node[above]{}};
\filldraw [black] (6.5,-8.5) circle (2pt) {node[above]{}};
\filldraw [black] (7.5,-8.5) circle (2pt) {node[above]{}};

\begin{scope}[thin,decoration={
    markings,
    mark=at position 0.5 with {\arrow{>}}}
    ] 
\draw[postaction={decorate}] (6,-6.5) -- (6.5,-5.5);
\draw[postaction={decorate}] (6,-7.5) -- (6,-6.5);
\draw[postaction={decorate}] (6.5,-8.5) -- (6,-7.5);
\draw[postaction={decorate}] (6.5,-8.5) -- (7.5,-8.5);
\draw[postaction={decorate}] (7.5,-8.5) -- (8,-7.5);
\draw[postaction={decorate}] (8,-7.5) -- (8,-6.5);
\draw[postaction={decorate}] (8,-6.5) -- (7.5,-5.5);
\draw[postaction={decorate}] (7.5,-5.5) -- (6.5,-5.5);

\draw[postaction={decorate}] (6.5,-5.5) -- (8,-6.5);
\draw[postaction={decorate}] (8,-6.5) -- (6,-6.5);
\draw[postaction={decorate}] (6,-6.5) -- (8,-7.5);
\draw[postaction={decorate}] (8,-7.5) -- (6,-7.5);
\draw[postaction={decorate}] (6,-7.5) -- (7.5,-8.5);

\draw[postaction={decorate}] (7.5,-8.5) .. controls (8.2,-8) and (8.7,-7.5) .. (8,-6.5);
\draw[postaction={decorate}] (6.5,-5.5) .. controls (6,-5.5) and (5,-6.3) .. (6,-7.5);
\draw[postaction={decorate}] (6.5,-5.5) .. controls (5,-5.5) and (4.5,-7.5) .. (6.5,-8.5);
\draw[postaction={decorate}] (7.5,-5.5) .. controls (9,-6.5) and (9.2,-7.9) .. (7.5,-8.5);
\end{scope}

\node at (7,-9.25) {$\overrightarrow{H}_{15}$};

\filldraw [black] (10.5,-5.5) circle (2pt) {node[above]{}};
\filldraw [black] (11.5,-5.5) circle (2pt) {node[above]{}};
\filldraw [black] (10,-6.5) circle (2pt) {node[above]{}};
\filldraw [black] (12,-6.5) circle (2pt) {node[above]{}};
\filldraw [black] (10,-7.5) circle (2pt) {node[above]{}};
\filldraw [black] (12,-7.5) circle (2pt) {node[above]{}};
\filldraw [black] (10.5,-8.5) circle (2pt) {node[above]{}};
\filldraw [black] (11.5,-8.5) circle (2pt) {node[above]{}};

\begin{scope}[thin,decoration={
    markings,
    mark=at position 0.5 with {\arrow{>}}}
    ] 
\draw[postaction={decorate}] (10.5,-5.5) -- (10,-6.5);
\draw[postaction={decorate}] (10,-7.5) -- (10,-6.5);
\draw[postaction={decorate}] (10,-7.5) -- (10.5,-8.5);
\draw[postaction={decorate}] (10.5,-8.5) -- (11.5,-8.5);
\draw[postaction={decorate}] (12,-7.5) -- (11.5,-8.5);
\draw[postaction={decorate}] (12,-7.5) -- (12,-6.5);
\draw[postaction={decorate}] (12,-6.5) -- (11.5,-5.5);
\draw[postaction={decorate}] (10.5,-5.5) -- (11.5,-5.5);

\draw[postaction={decorate}] (10.5,-5.5) -- (12,-6.5);
\draw[postaction={decorate}] (10.5,-5.5) -- (11.5,-8.5);
\draw[postaction={decorate}] (10,-7.5) -- (11.5,-8.5);

\draw[postaction={decorate}] (10.5,-5.5) -- (12,-7.5);
\draw[postaction={decorate}] (11.5,-8.5) -- (10,-6.5);

\draw[postaction={decorate}] (10.5,-8.5) .. controls (11.5,-9.2) and (12.7,-8.1) .. (12,-6.5);
\draw[postaction={decorate}] (11.5,-5.5) .. controls (9.8,-4.5) and (9.5,-6) .. (10,-7.5);
\draw[postaction={decorate}] (12,-6.5) .. controls (13.7,-9.1) and (10,-10) .. (10,-7.5);
\end{scope}

\node at (11,-9.25) {$\overrightarrow{H}_{16}$};

\end{tikzpicture}

\caption{A list of planar push cliques whose underlying graphs $H_1, H_2, \cdots, H_{16}$ is an exhaustive list of edge-minimal planar underlying push cliques.}\label{push_fig planar push cliques}
\end{figure}
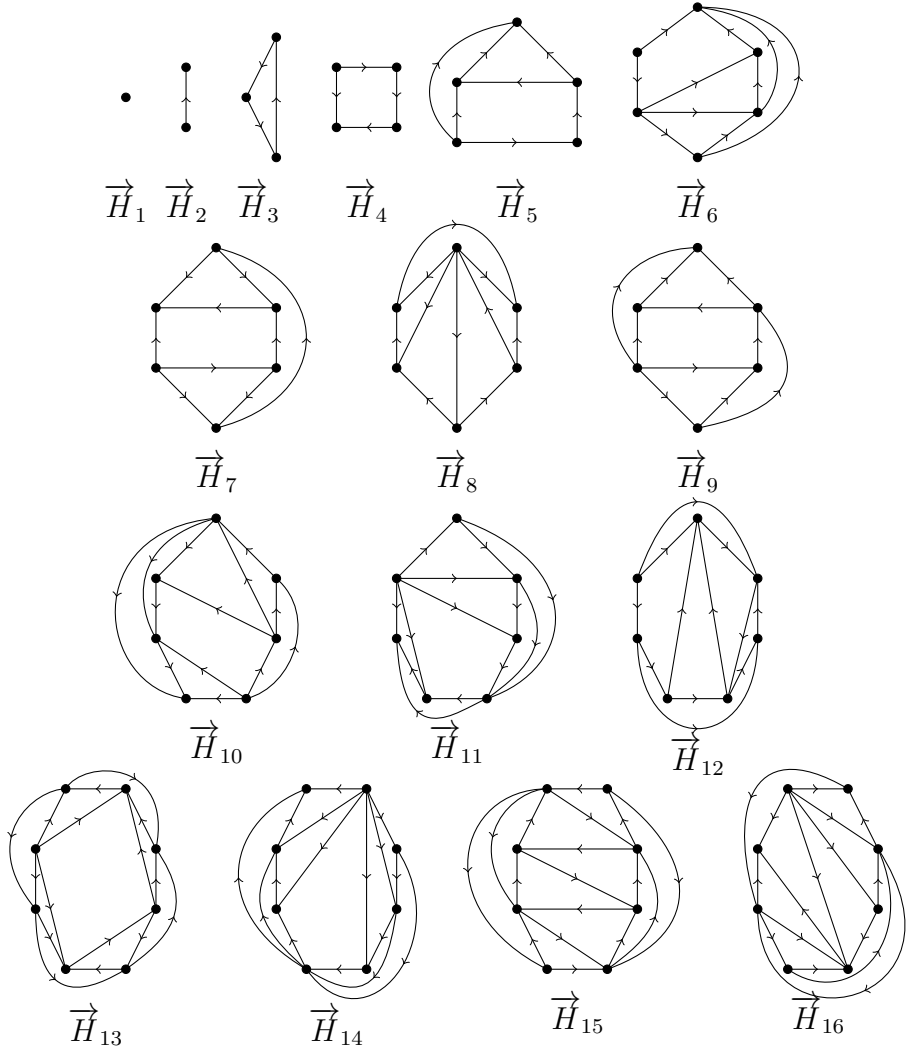

\begin{theorem}\label{push_th planar push clique list}
An undirected  planar graph  is an underlying push clique   if and only if it contains an underlying graph of one of the 16 planar graphs depicted in 
Figure~\ref{push_fig planar push cliques} as a spanning subgraph.
\end{theorem}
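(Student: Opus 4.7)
The plan is to follow the structure of the original proof of Theorem~1.3 from \cite{bensmail2017oriented}, with the substitution of the corrected Lemma~5.9 for the erroneous one. Since the statement is a pure characterization, it splits into a sufficiency and a necessity direction, and the erratum localizes the correction to a single branch of a single lemma.

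For the sufficiency direction, I would verify that each of the $16$ orientations $\overrightarrow{H}_1, \ldots, \overrightarrow{H}_{16}$ depicted in Figure~\ref{push_fig planar push cliques} is a push clique. Since the push clique property is preserved under edge additions (any new edge can simply be oriented arbitrarily without destroying push-adjacency of existing pairs), this immediately gives that every planar graph containing some $H_i$ as a spanning subgraph is an underlying push clique. The verification itself is a case-by-case check: for each pair of non-adjacent vertices $u,v$ in $\overrightarrow{H}_i$, I exhibit a subset $S$ of vertices whose simultaneous push turns $uv$ into an arc of the pushed oriented graph. For $i \neq 10$ these checks are unchanged from \cite{bensmail2017oriented}; for $i = 10$ the added arc is exactly what closes the one pair that was previously not push-equivalent.

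For the necessity direction, I start with an arbitrary planar undirected push clique $G$ and invoke the structural lemmas 5.1--5.8 of \cite{bensmail2017oriented}, which bound $|V(G)|$ by $8$ and restrict the planar embedding and degree sequence of $G$. These results do not depend on the faulty Figure~1 and are reused verbatim. The remaining reduction, namely matching the restricted $G$ against the list $H_1, \ldots, H_{16}$, is carried out by the corrected Lemma~5.9 whose short proof is given in the next section of this erratum.

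The main obstacle is the rework of Lemma~5.9, since the old case analysis implicitly treated the faulty $\overrightarrow{H}_{10}$ as a push clique, and therefore one of its branches concluded with a non-clique as the minimal spanning subgraph. The intended short proof isolates precisely that branch, shows directly that the push clique condition on $G$ forces the arc missing from the figure, and combines this with the unaffected branches to recover the full conclusion. Once Lemma~5.9 is re-established, the necessity direction of Theorem~\ref{push_th planar push clique list} follows by assembling the structural restrictions with the corrected case analysis exactly as in \cite{bensmail2017oriented}.
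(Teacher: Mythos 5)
Your overall architecture is exactly the paper's: the sufficiency direction is the routine check that each $\overrightarrow{H}_i$ is a push clique together with monotonicity of the push clique property under arc addition, and the necessity direction reuses the structural lemmas of the original paper wholesale, the single faulty ingredient being Lemma~5.9 (the case $|V(G)|=7$; note that the matching against the list for the other orders is done by other, unaffected lemmas, not by Lemma~5.9). So there is no disagreement about the route.

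The gap is that the one piece of new mathematics this correction requires --- an actual proof of the corrected Lemma~5.9 --- is never carried out; you only announce that ``the intended short proof isolates precisely that branch'' and ``forces the arc missing from the figure.'' That description also underestimates what is needed. The error in $\overrightarrow{H}_{10}$ is a missing arc, so the corrected undirected $H_{10}$ has one more edge, and the claim that every $7$-vertex underlying push clique contains $H_{10}$, $H_{11}$ or $H_{12}$ as a spanning subgraph becomes strictly stronger in \emph{every} branch of the case analysis that terminates in $H_{10}$, not just one. Accordingly, the paper does not patch a single branch: it reproves the lemma from scratch by an exhaustive case analysis on the number $m$ of medium chords of the Hamiltonian cycle $C_7$ (the cases $m=0,1,2,3,4$ and $m\geq 5$), in each case enumerating which short chords can be added to reach reach-completeness without creating a dominating vertex or a $K_5$- or $K_{3,3}$-minor, and checking in each surviving configuration that one of the corrected graphs $H_{10}$, $H_{11}$, $H_{12}$ appears. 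Until you supply that (or an equivalent) enumeration, the necessity direction of the theorem is unproven for $|V(G)|=7$, which is precisely the point of the erratum.
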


The proof of the above result is correct in the paper~\cite{bensmail2017oriented} except for some errors 
in Lemma~5.9 from~\cite{bensmail2017oriented}. The statement of the lemma is unchanged nevertheless. 
We recall the statement here along with a corrected proof.

\begin{lemma}\label{push_lem 7 list}
If $G$ is an underlying push clique having $|V(G)| = 7$,  then 
 $G$ contains   $H_{10}, H_{11}$ or $H_{12}$ as a spanning subgraph.  
\end{lemma}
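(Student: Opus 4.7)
The plan is to argue, in the planar context in which Lemma~\ref{push_lem 7 list} is applied (as a step in the proof of Theorem~\ref{push_th planar push clique list}), by combining the structural consequences of the preceding lemmas of Section~5 of~\cite{bensmail2017oriented} with the local rigidity imposed on $G$ by the push-clique definition. The key idea is that the push-clique property, when applied repeatedly to pairs of non-adjacent vertices, forces so many edges that only a small, concrete family of candidate $7$-vertex underlying graphs can arise, and each candidate is easily checked to contain one of $H_{10}$, $H_{11}$, or $H_{12}$.

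First I would collect the consequences of the preceding lemmas: a lower bound on the minimum degree of $G$, restrictions on the number of vertices of small degree, and the local ``witness'' condition that any two non-adjacent vertices of $G$ share a common neighbourhood carrying one of a short list of admissible arc patterns. Combined with the planar edge bound $|E(G)| \le 3|V(G)| - 6 = 15$, these constraints confine $|E(G)|$ to a narrow window near the planar maximum, so only a small number of candidate underlying graphs on $7$ vertices remain to be inspected.

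Next, I would organise the analysis by the maximum degree $\Delta(G)$ together with the number of vertices attaining it, since the three target graphs are already distinguished at that level: $H_{12}$ has degree sequence $(4,4,4,3,3,3,3)$, $H_{10}$ has a unique vertex of maximum degree $5$ with sequence $(5,4,4,4,3,3,3)$, and $H_{11}$ has two vertices of degree $5$ with sequence $(5,5,4,3,3,3,3)$. Fixing such a vertex (or pair of vertices) of maximum degree, I would then use the push-clique witness condition on each non-adjacent pair to force additional arcs, until the underlying graph of $G$ is determined up to relabelling, and then exhibit the appropriate $H_i$ as a spanning subgraph.

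The principal obstacle, and the one which caused the misstatement in the original version of Lemma~5.9, lies in the case analysis itself. Because the push-clique condition is a universal quantifier over pairs of vertices, an inconsistent choice made early in a branch can propagate unnoticed. I would therefore be careful to verify, at the end of each branch, that \emph{every} non-adjacent pair of the candidate graph is properly witnessed, and to use planarity systematically (ruling out $K_5$ or $K_{3,3}$ subdivisions on specific subsets of vertices) to discard infeasible branches before they proliferate.
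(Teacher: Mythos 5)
There is a genuine gap here: your text is a plan for a proof rather than a proof. The entire mathematical content of this lemma lives in the case analysis (indeed, the erratum exists precisely because one branch of that analysis was mishandled), and your proposal never executes it. You assert that the degree and witness constraints together with the planar bound $|E(G)|\le 15$ ``confine $|E(G)|$ to a narrow window near the planar maximum'' so that ``only a small number of candidate underlying graphs remain,'' but you neither establish that window nor enumerate the candidates; note that $H_{12}$ has only $12$ edges, so the window is not obviously narrow, and nothing in the preceding lemmas pins down the edge count directly. Moreover, keying the case split to the degree sequences of $H_{10}$, $H_{11}$, $H_{12}$ is not sound for this statement: the conclusion is containment as a \emph{spanning subgraph}, so $G$ may have strictly more edges and a different degree sequence than whichever $H_i$ it contains, and a vertex of maximum degree in $G$ need not correspond to one in the target graph.

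The paper's proof closes exactly this gap by exploiting a structural fact you do not use: by the earlier lemmas of Section~5 of~\cite{bensmail2017oriented}, a putative counterexample may be taken to be a \emph{Hamiltonian} planar reach-complete graph on $7$ vertices with minimum degree $3$ and no dominating vertex. Fixing the Hamiltonian cycle $C_7$ reduces every candidate to a choice of short and medium chords, and the case analysis is then organised by the number $m$ of medium chords ($m=0,1,2,3,4,\ge 5$), with planarity applied through $K_5$- and $K_{3,3}$-minors and the reach-completeness condition forcing specific chords in each branch. This gives a finite, checkable enumeration in which every branch terminates in a contradiction or in one of $H_{10}$, $H_{11}$, $H_{12}$. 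If you want to salvage your approach, you would at minimum need to (i) invoke Hamiltonicity or some equally strong structural reduction to make the candidate set genuinely finite and small, and (ii) actually carry out and record the resulting branch-by-branch verification.
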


\begin{proof}
We will try to construct a Hamiltonian planar reach-complete graph 
$G$ on 7 vertices,
without any dominating vertex,  
  with minimum degree 3, not containing 
 $H_{10}, H_{11}$ or $H_{12}$ as a subgraph.
  If such a graph $G$ does not exist, then we are done  due to 
  Lemma~5.8, Observation~5.3, Lemma~5.4 and Observation~2.3 from~\cite{bensmail2017oriented}.
  We will show that 
  such a graph $G$ does not exist through a case analysis.

Assume that $G$ is $C_7$ having $m$ medium chords  and $s$ short chords.

 If $m = 0$, then to make $G$ reach-complete we  need to have all the short chords,
 thus a $K_5$-minor.

If $m = 1$, then without loss of generality assume that the medium chord is $m_0$. 
Now we must add  $s_2$ and $s_6$ to make $G$ reach-complete. 
Also we add $s_5$ without loss of generality to have $d(a_5) \geq 3$.
  This graph has a Hamiltonian cycle 
$a_0a_3a_4a_2a_1a_6a_5a_0$   with 
2 medium chords $a_0a_1$ and $a_4a_5$. Thus this case gets reduced to the case $m \geq 2$.

If $m = 2$, then without loss of generality assume that one of the medium chords is $m_0$. The second medium chord can be chosen in three ways (up to symmetry):

\begin{itemize}
\item  The second medium chord is $m_4$. Observe that it is not possible to 
make this graph reach-complete by adding short chords without creating a dominating vertex or a $K_5$-minor.

 \item  The second medium chord is $m_5$. Thus we must have  $s_2$ 
 for 
$a_2$ to reach $a_4$ and $s_4$ for $a_4$ to reach $a_6$. 
This graph has a Hamiltonian cycle $a_0a_3a_4a_2a_1a_5$
$a_6a_0$ with 3 long chords $a_0a_1$, 
$a_4a_5$ and $a_4a_6$. Thus this case gets reduced to the  case $m \geq 3$. 
 
\item  The second medium chord is $m_1$. Without loss of generality we may add the short chord $s_0$ for having $d(a_2) \geq 3$. 
 Observe that there are exactly three ways  to 
make this graph reach-complete by adding short chords without creating a dominating vertex or a $K_5$-minor: 
$(i)$ by adding $s_0, s_2, s_3$ and $s_6$  implying $H_{12} \sqsubseteq G$, 
$(ii)$ by adding $s_0, s_2, s_4$ and $s_5$  implying $H_{11} \sqsubseteq G$, 
$(iii)$ by adding $s_0, s_3, s_5$ and $s_6$  implying $H_{12} \sqsubseteq G$.
\end{itemize}

 If $m = 3$, then three non incident medium chords will create a 
 $K_{3,3}$-minor. Thus we can assume that $G$ has two incident medium chords $m_0$ and $m_4$. Without loss of generality 
 the choice of the third chord gives us three subcases.

\begin{itemize}
\item  The third medium chord is $m_1$. Observe that it is not possible to 
make this graph reach-complete by adding short chords without creating a dominating vertex or a $K_5$-minor.

  \item  The third medium chord is $m_2$. 
 Observe that there are exactly two ways  to 
make this graph reach-complete by adding short chords without creating a dominating vertex or a $K_5$-minor: $(i)$ by adding $s_2$ and $s_6$  implying 
$H_{12} \sqsubseteq G$, $(ii)$ by adding $s_4$ and $s_6$  implying 
$H_{12} \sqsubseteq G$.

 \item  The third medium chord is $m_5$. 
 Observe that there are exactly two ways  to 
make this graph reach-complete by adding short chords without creating a dominating vertex or a $K_5$-minor: $(i)$ by adding $s_0$, $s_3$ and $s_6$  implying 
$H_{10} \sqsubseteq G$, $(ii)$ by adding $s_1$, $s_2$, $s_4$ and $s_6$  implying 
$H_{11} \sqsubseteq G$.
\end{itemize}

 If $m = 4$, then three non-incident medium chords will create a $K_{3,3}$-minor. 
 Thus barring those cases, without loss of generality we have two cases.

\begin{itemize}
\item  Suppose the four medium chords are $m_0, m_1, m_4$ and $m_5$. 
To have $d(a_2),$
$d(a_6) \geq 3$ we must add a matching of size 2 
in the set $\{a_1a_6, a_4a_6, a_0a_2, a_4a_2\}$ of short chords.  
Note that $s_5$ implies a dominating vertex $a_0$ and the edge $s_1$ implies 
a dominating vertex $a_1$. 
Suppose that we have $s_6$ and $s_0$. 
Then $a_3$ must reach $a_5$ by being adjacent $s_3$, creating a $K_5$-minor.

\item  Suppose the four medium chords are $m_0, m_1, m_3$ and $m_4$. 
Note that it is not possible to make this graph reach complete by adding short chords without creating a dominating vertex or  a $K_5$-minor.

\item  Suppose the four medium chords are $m_0, m_2, m_4, m_6$.  
To have $d(a_1) \geq 3$,  we must add one of the short chords from $\{s_1, s_6\}$. 
However, adding $s_1$ is symmetric (rotation after reflection)  
to adding $s_6$. 
Thus assume without loss of generality that we have added $s_6$ to our graph. 
Note that adding $s_1$ or $s_4$ will imply $H_{12}  \sqsubseteq G$. 
Observe that the only way to make this graph reach complete without adding $s_1$ or $s_4$ is to add the short chord $s_2$. This will imply $H_{10}  \sqsubseteq G$.
\end{itemize}

If $m \geq 5$, then $G$ is not planar as it contains a $K_{3,3}$.  
\end{proof}

\bigskip 

Moreover, we would like to report that there are, in total, 48 non-isomorphic 
planar underlying push cliques (1 on 1 vertex, 1 on 2 vertices, 1 on 3 vertices, 3 on 4 vertices, 4 on 5 vertices, 11 on 6 vertices, 14 on 7 vertices and 13 on 8 vertices). 
See the lists in the webpage:~http://jbensmai.fr/code/push/~for details.  

This count is updated and we have also updated the webpage link accordingly.

\bigskip

\noindent {\bf Acknowledgement:} The authors would like to thank Mr. Mithun Roy (Department of Computer Science and Engineering, 
Siliguri Institute of Technology, 
India) for pointing out the error.

%% The Appendices part is started with the command \appendix;
%% appendix sections are then done as normal sections
%% \appendix

%% \section{}
%% \label{}

%% If you have bibdatabase file and want bibtex to generate the
%% bibitems, please use
%%
%%  \bibliographystyle{elsarticle-num} 
%%  \bibliography{<your bibdatabase>}

%% else use the following coding to input the bibitems directly in the
%% TeX file.
%\bibliographystyle{elsarticle-harv}
%\bibliography{POreferences}

%%% \bibitem{label}
%%% Text of bibliographic item
%
%\bibitem{}
%
%\end{thebibliography}
\end{document}